\newtheorem{theorem}{Theorem}
\begin{document}
\title{Entangled state distillation from single copy mixed states beyond LOCC}
\author{Indranil Biswas}
\email{indranilbiswas74@gmail.com}
\affiliation{Department of Applied Mathematics, University of Calcutta, 92, A.P.C. Road, Kolkata- 700009, India}
\author{Atanu Bhunia}
\email{atanu.bhunia31@gmail.com}
\affiliation{Department of Applied Mathematics, University of Calcutta, 92, A.P.C. Road, Kolkata- 700009, India}
\author{Indrani Chattopadhyay}
\email{icappmath@caluniv.ac.in}
\affiliation{Department of Applied Mathematics, University of Calcutta, 92, A.P.C. Road, Kolkata- 700009, India}
\author{Debasis Sarkar}
\email{dsarkar1x@gmail.com, dsappmath@caluniv.ac.in}
\affiliation{Department of Applied Mathematics, University of Calcutta, 92, A.P.C. Road, Kolkata- 700009, India}
\begin{abstract}
   No pure entangled state can be distilled from a $2\otimes 2$ or $2\otimes 3$ mixed state by separable operations. In $3\otimes 3$, pure entanglement can be distilled by separable operation but not by LOCC. In this letter, we proved the conjecture [PRL. 103, 110502 (2009)] that it is possible to distill pure entanglement for $2\otimes 4$ system by LOCC and further improve these in higher dimensions to distill a pure entangled state of Schmidt rank $d$ from a $m\otimes n$ mixed state by separable operation when $m+n \geqslant 3d$. We found results for tripartite systems with target state $d$-level GHZ-type state. These results provide a class of systems where separable operation is strictly stronger than LOCC.
\end{abstract}
\date{\today}
\pacs{03.67.Mn.; 03.65.Ud.; Keywords: Entanglement, LOCC, Separable operations, distillation;}
\maketitle

The phenomenon of entanglement in composite quantum systems is a striking feature without any analogue in classical physics. Although initially studied to unreveal the foundational aspects, the progress made in the last few decades elevated it to a resource in quantum information theory. Quantum teleportation, superdense coding, quantum key distribution, etc., are some of the examples of quantum information processing tasks where entanglement plays a crucial role. The precision of such tasks depend on the presence of a pure entangled state, preferably a maximally entangled state of a bipartite system. But such states are rare in nature. Due to imperfect quantum operations and decoherence, we usually have a noisy state in the laboratory. Bennett, Brassard et. al. \cite{1} in 1996 first considered this problem and developed a protocol to distill $k$ copies of nearly singlet states from $n(>k)$ copies of noisy entangled states. However, `useful' entanglement cannot be distilled from every noisy entangled state. There are examples of bound entangled states \cite{2} from which pure entangled states cannot be distilled even in the asymptotic regime.\\

Pure entangled states are very fragile in nature. Entanglement is lost whenever the state is exposed to excessive environmental noise. A clever way to perform quantum operations is to measure one party at a time. The outcome is then broadcasted globally by that party through classical means of communications. Depending on the outcome, some other parties will then measure his/her local systems and again broadcast the outcomes globally. This continues until the termination of that particular protocol. Such measurement schemes are acronymed by LOCC (Local Operations along with Classical Communications) \cite{3} in the literature. Despite having a simple physical description, mathematical characterization of LOCC is yet to discover. Nevertheless, a larger class of operations than LOCC, called separable operations have a rather compact mathematical structure.\\

Every physically realizable quantum operation is described by a completely positive superoperator  $\Omega$. When applied on a physical state $\rho$,  such superoperators map $\rho$ by the rule $\Omega(\rho)= \sum_{i=1}^n {E_i}{\rho}{E_i}^{\dagger}$ satisfying $\sum_{i=1}^n {E_i}^{\dagger}{E_i}= \mathbb{I}$, where $\{{{E_i\}}_{i=1}^n}$ is called a complete set of Kraus operators \cite{4}. If each ${E_i}$'s are of the form ${E_i}={A_i}^1 \otimes{A_i}^2 \otimes........\otimes{A_i}^k$ for a $k$-partite system, then the operation is called a separable operation (SEP). It is evident from the mathematical description of separable operation that LOCC forms a strict subclass of the former. Although, LOCC $\subsetneq$ SEP, it is still intriguing to know exactly where LOCC differs from separable operation. Several distinction has been found in terms of distinguishability of states \cite{5,6,7,8,9,10,11,12,13,14,15,16,17,18,19}, of these two classes of operation. Another arena where this distinction exists is deterministic transformation of single copy mixed states into pure entangled states \cite{21}. Kent \cite{20} and Chitamber et. al. \cite{21} have shown that both these two classes of operation are equally incompetent in distilling pure entangled states from one copy of a mixed state for $2\otimes 2$ and $2\otimes 3$ systems. But for $3\otimes 3$ systems, separable operation outperforms LOCC. These results inspired us to ask a very relevant question: Does there exist any relation between distillation of pure entangled states by separable operations and dimensions of the subsystems? Also, how LOCC and separable operations differ under general distillation scenario? We have found some affirmative answers to these questions for bipartite quantum systems which are considered as long standing problem in quantum information theory. We have further improved these results upto some extent for tripartite systems. Our result generalizes the possibility of distilling a pure entangled state of Schmidt rank $d$ from a bipartite $m\otimes n$ system by separable operation when $m+n\geqslant 3d$. We have constructed a generic class of mixed states and a class of separable operators that can distill a pure maximally entangled state when the equality holds. Furthermore, when one of the subsystem is of dimension $2d$ (and the other is of dimension at least $d$), the transformation is even possible by LOCC and thereby achieving a class of bipartite systems where separable operation is strictly stronger than LOCC. Note that in this letter we are only concerned about deterministic distillation \cite{21}. Unless otherwise stated, the term distillation would simply mean here distillation in deterministic sense.\\

For a tripartite $p\otimes q\otimes r$ system, we have upgraded the bipartite bound to $p+q+r\geqslant 4d$. However, this bound exists only for certain classes of mixed states and when the target state is a $d$-level Greenberger-Horne-Zeilinger (GHZ) type state where $\ket{GHZ_d}=\frac{1}{\sqrt{d}}\sum_{i=0}^{d-1}\ket{i}\otimes\ket{i}\otimes\ket{i}$. A distinction of separable and LOCC classes, similar to that of bipartite system has also been found. We now move to present the main results of this letter below.

In \cite{21}, it is conjectured that there exist mixed states in $2\otimes n$ $(n>3)$ systems from which pure entangled states can be distilled. In fact, LOCC is sufficient to distill such a state. We find that this is indeed the case. Consider the state of a $2\otimes 4$ system of the form $\rho=p\ket{\psi_1}\bra{\psi_1}+(1-p)\ket{\psi_2}\bra{\psi_2}$ shared between Alice and Bob, where $\ket{\psi_1}=\frac{1}{\sqrt{2}}(\ket{00}+\ket{11})$ and $\ket{\psi_2}=\frac{1}{\sqrt{2}}(\ket{02}+\ket{13}).$ Now, to distill the pure entangled state $\ket{\psi_1}$, Bob will perform a local measurement $\{B_1, B_2\}$ on his subsystem such that $B_1=\ket{0}\bra{0}+\ket{1}\bra{1}$, $B_2=\ket{0}\bra{2}+\ket{1}\bra{3}$.\\

Next, we observe that although LOCC and separable operations do not coincide in $3\otimes 3$ systems, the entangled state distilled by the latter class of operations is only of Schmidt rank $2$. Thus we have succeeded in characterization of bipartite systems for arbitrary (but finite) dimensions.\\
\begin{theorem}
There exist $m\otimes n$ mixed states that can be deterministically converted into a maximally entangled state of Schmidt rank $d$ by a suitable separable operation if and only if $m+n\geqslant 3d$ where $\min\{m,n\}\geqslant d$. Additionally, if $ \max\{m,n\}\geqslant 2d$ the task can also be achieved by LOCC.
\end{theorem}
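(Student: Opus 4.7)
My plan is to split the theorem into three pieces --- a constructive ``if'' direction, an impossibility ``only if'' direction, and the LOCC addendum --- and attack each by a different technique; throughout I take $m\leq n$ without loss of generality. For the constructive direction, I would extend the $2\otimes 4$ example preceding the theorem by considering a rank-two mixed state
\[\rho = \tfrac{1}{2}\ket{\psi_1}\bra{\psi_1} + \tfrac{1}{2}\ket{\psi_2}\bra{\psi_2},\]
where both $\ket{\psi_j}$ are Schmidt-rank-$d$ maximally entangled, $\ket{\psi_1}=\frac{1}{\sqrt{d}}\sum_{i=0}^{d-1}\ket{i}\ket{i}$ is the target, and the Alice- and Bob-supports of $\ket{\psi_2}$ overlap those of $\ket{\psi_1}$ in $a_A=\max(2d-m,0)$ and $a_B=\max(2d-n,0)$ dimensions respectively. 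These combined supports fit inside $\mathbb{C}^m\otimes\mathbb{C}^n$ precisely when $m+n\geq 3d$. I would then design two separable Kraus operators $A_j\otimes B_j$ that each project onto a $d$-dimensional slice, using sign choices in the overlapping coordinates to cancel the ``wrong'' component by interference (this is exactly what pushes the protocol beyond LOCC when $a_A,a_B>0$), and close the Kraus set with operators supported orthogonally to $V:=\operatorname{supp}(\rho)$ to enforce completeness.

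For the impossibility direction, assume $\Omega(\rho)=\ket{\Phi}\bra{\Phi}$ via a separable channel with Kraus operators $E_i=A_i\otimes B_i$. Rank-one-ness of $\ket{\Phi}\bra{\Phi}$ forces each $E_i\rho E_i^{\dagger}$ to be a nonnegative multiple of $\ket{\Phi}\bra{\Phi}$, so $E_i\ket{\psi}=f_i(\psi)\ket{\Phi}$ for a linear functional $f_i$ on $V$; together with $\sum_i|f_i(\psi)|^2=\|\psi\|^2$ coming from $\sum_i E_i^{\dagger}E_i=\mathbb{I}$, this forces every nonzero $\ket{\psi}\in V$ to have Schmidt rank at least $d$. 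For any $i$ with $E_i|_V\neq 0$, the ranges of $A_i$ and $B_i$ must each be at least $d$-dimensional, so $\dim\ker A_i\leq m-d$ and $\dim\ker B_i\leq n-d$. Since $\rho$ is mixed, $\dim V\geq 2$ and the codimension-one subspace $\ker f_i\subseteq V$ contains a nonzero vector which, because it lies in $\ker A_i\otimes\mathbb{C}^n+\mathbb{C}^m\otimes\ker B_i$, can be written $\ket{\psi_A}+\ket{\psi_B}$ with $\operatorname{Sr}(\ket{\psi_A})\leq m-d$ and $\operatorname{Sr}(\ket{\psi_B})\leq n-d$. Subadditivity of Schmidt rank under addition then gives $d\leq (m-d)+(n-d)$, which rearranges to $m+n\geq 3d$.

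For the LOCC addendum, when $\max\{m,n\}\geq 2d$, say $n\geq 2d$, I would pick $\ket{\psi_2}$ in the construction above so that the Bob-supports of the two components are orthogonal $d$-dimensional subspaces (possible exactly because $n\geq 2d$). Bob's local measurement with Kraus operators $B_j=\sum_{i=0}^{d-1}\ket{i}\bra{b_i^{(j)}}$, each isometrically sending the $j$-th Bob-support onto $\{\ket{0},\dots,\ket{d-1}\}$, then deterministically outputs $\ket{\Phi}$ on every outcome, giving a one-way LOCC protocol.

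The main obstacle I anticipate is not the impossibility direction --- which runs smoothly once the Schmidt-rank-$\geq d$ property of $V$ is established and subadditivity is applied --- but the constructive SEP direction in the regime $d\leq m\leq n<2d$, where both overlap dimensions $a_A,a_B$ are strictly positive. There the two Kraus operators must cancel nontrivial cross-terms across the overlapping coordinates while simultaneously satisfying $\sum_j A_j^{\dagger}A_j\otimes B_j^{\dagger}B_j=\mathbb{I}_m\otimes\mathbb{I}_n$, and checking that a valid construction exists uniformly in $m,n,d$ will require a careful choice of phases and normalizations.
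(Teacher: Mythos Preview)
Your impossibility argument and your LOCC construction match the paper's essentially line for line (your use of subadditivity of Schmidt rank actually makes the paper's case~(i) rigorous). The real divergence is in the constructive SEP direction, and there your plan differs from the paper in a way worth flagging.

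You propose to cancel the ``wrong'' component of $\ket{\psi_2}$ by \emph{interference}, choosing signs in the overlapping coordinates so that $A_j\otimes B_j$ coherently kills it; you correctly identify this as the hard part and leave the uniform construction open. The paper sidesteps interference altogether. Its key move is in the choice of $\ket{\psi_2}$: with $m=d+k_1$, $n=d+k_2$, $k_1+k_2=d$, it shifts Alice's index by $k_1$ and Bob's index by $d$ modulo $n$, taking $\ket{\psi_2}=\frac{1}{\sqrt d}\sum_{i}\ket{k_1+i,\,d\oplus_n i}$. Now consider the diagonal operator $E_1=A_1\otimes B_1$ with $A_1$ supported on Alice's levels $0,\dots,d-1$ and $B_1$ on Bob's levels $0,\dots,d-1$. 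For each term of $\ket{\psi_2}$, either $k_1+i\ge d$ (so $A_1$ kills it) or $i<k_2$ and then $d\oplus_n i=d+i\ge d$ (so $B_1$ kills it); the two ranges are complementary precisely because $k_1+k_2=d$. Thus $E_1\ket{\psi_2}=0$ term by term, with no phases or cancellation. The second Kraus operator $E_2$ plays the symmetric role, and the $1$ versus $1/\sqrt2$ diagonal weights are there only to equalize the outputs, not to create interference. This makes the verification uniform in $m,n,d$ and removes exactly the obstacle you anticipated.

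One further point: the paper also proves an LOCC \emph{converse} (if $\max\{m,n\}<2d$ then no LOCC protocol works) via an inductive argument over LOCC rounds, showing that along every branch neither eigenstate is ever annihilated; this is not literally demanded by the statement of the theorem, but it is what establishes the separation between SEP and LOCC that the paper highlights, and your proposal does not address it.
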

\begin{proof}

Without loss of generality, we assume that $n \geqslant m \geqslant d$ and $m=d+k_1, n = d+k_2$ such that $k_1 +k_2 = d$ where $k_1\geqslant 1, k_2\geqslant 1$ are integers so that $m+n = 3d$. Consider the states $\ket{\psi_1}=\frac{1}{\sqrt{d}}\sum_{i=0}^{d-1}\ket{i,i}$ and $\ket{\psi_2}=\frac{1}{\sqrt{d}}\sum_{i=0}^{d-1}\ket{k_1 +i,d\oplus_n i}$ where $\oplus_k$ denotes the binary operation addition modulo $k$ and $\ket{i,j}=\ket{i}\otimes \ket{j}$. Set, $\rho = p\ket{\psi_1}\bra{\psi_1}+(1-p)\ket{\psi_2}\bra{\psi_2}$ where $p\in(0,1)$. Since $\braket{\psi_1}{\psi_2}= 0$, $\rho$ is a valid mixed state of rank two. Then the separable operation with Kraus operators $\{ E_1, E_2 \}$ defined by $$E_1 = \big({\sum_{i=0}^{d-1} \eta_i \ket{i}\bra{i}}\big)\otimes \big({\sum_{i=0}^{d-1} \nu_i \ket{i}\bra{i}}\big),$$ $$E_2 = \big({\sum_{i=0}^{d-1} \eta_{i}{'} \ket{i}\bra{{k_1}+i}}\big)\otimes \big({\sum_{i=0}^{d-1} \nu_{i}{'} \ket{i}\bra{d\oplus_n i}}\big)$$ where
\begin{align*}
\eta_i=\begin{cases} 1 \hspace{4mm} ;\: 0\leqslant i \leqslant k_1 -1 \\ \frac{1}{\sqrt{2}} \hspace{1mm} ;\: k_1 \leqslant i \leqslant d-1
\end{cases}
\nu_i=\begin{cases} \frac{1}{\sqrt{2}} \hspace{1mm} ;\: 0\leqslant i \leqslant k_1 -1 \\ 1 \hspace{4mm} ;\: k_1 \leqslant i \leqslant d-1
\end{cases}
\end{align*} \vspace{1mm}
\begin{align*}
\eta_{i}{'}=\begin{cases} \frac{1}{\sqrt{2}} \hspace{1mm} ;\: 0\leqslant i \leqslant k_2 -1 \\ 1 \hspace{4mm} ;\: k_2 \leqslant i \leqslant d-1
\end{cases}
\nu_{i}{'}=\begin{cases} 1 \hspace{4mm} ;\: 0\leqslant i \leqslant k_2 -1 \\ \frac{1}{\sqrt{2}} \hspace{1mm} ;\: k_2 \leqslant i \leqslant d-1
\end{cases}
\end{align*}
can convert $\rho$ into the pure maximally entangled state $\ket{\psi_1}$ of Schmidt rank $d$.\\

Also, if $m=d,\; n=2d$, then consider $\ket{\psi_1}=\frac{1}{\sqrt{d}}\sum_{i=0}^{d-1}\ket{i,i}$ and $\ket{\psi_2}=\frac{1}{\sqrt{d}}\sum_{i=0}^{d-1}\ket{i,d+i}$ such that $\rho= p\ket{\psi_1}\bra{\psi_1}+(1-p)\ket{\psi_2}\bra{\psi_2}$. Bob now applies a local measurement $\{B_1, B_2\}$ on his subsystem and converts $\rho$ into $\ket{\psi_1}$ where $B_1=\sum_{i=0}^{d-1} \ket{i}\bra{i}$ and $B_2=\sum_{i=0}^{d-1} \ket{i}\bra{d+i}$.\\

Conversely, we assume that $k_1+k_2< d$. If possible, suppose there exists a mixed state $\rho$ that can be converted into a pure entangled state of Schmidt rank $d$, by a suitable separable operation with Kraus operators $\{A_i\otimes B_i\}_i$. Without loss of generality, we assume that rank of $\rho$ to be two and $\ket{\psi_1}, \ket{\psi_2}$ are orthogonal eigenstates of $\rho$. The linear nature of separable operators implies that $\ket{\psi_1}, \ket{\psi_2}$ must both be of Schmidt rank at least $d$. So, there exists $A\otimes B\in\{A_i\otimes B_i\}_i$ such that $(A\otimes B)\ket{\psi_1}= a\ket{\phi}$ and $(A\otimes B)\ket{\psi_2}= b\ket{\phi}$ where $\ket{\phi}(a,b $ are scalars) is a pure entangled state of Schmidt rank $d$ and either (i) $a\neq 0, b\neq 0$ or (ii) only one of $a, b$ is non-zero. \\

For the case (i), $(A\otimes B)(b\ket{\psi_1}-a\ket{\psi_2})= 0$ implies that $b\ket{\psi_1}-a\ket{\psi_2}$ is a vector with Schmidt rank atmost $d-1$, which is impossible, since $(A\otimes B)(x\ket{\psi_1}+y\ket{\psi_2})= (ax+by)\ket{\phi}, \forall x,y \in \mathbb{C}$. For the case (ii), suppose $a\neq 0$. Then both $A$ and $B$ are operators of rank at least $d$ and therefore, for $b$ to be zero the sum of dimensions of the null spaces of $A$ and $B$ together is of at least $d$. But this contradicts our primary assumption that $k_1+k_2< d$.\\

Next for the case of LOCC, we follow the mathematical technique of \cite{21}. We assume that $\max\{k_1,k_2\}< d$. Without loss of generality, let $\rho$ be a mixed state of rank two and $\ket{\psi_1}, \ket{\psi_2}$ are its orthogonal eigenstates of Schmidt rank at least $d$. If possible, suppose there exists some LOCC protocol $\Omega$ such that $\Omega(\rho)= \ket{\phi}\bra{\phi}$ where $\ket{\phi}$ is a pure entangled state of Schmidt rank $d$. In fact, the LOCC protocol must transform $\ket{\psi_1}$ and $\ket{\psi_2}$ simultaneously into $\ket{\phi}$, i.e., $\Omega(\ket{\psi_1}\bra{\psi_1})= \Omega(\ket{\psi_2}\bra{\psi_2})=\ket{\phi}\bra{\phi}$. \\

Now, any bipartite LOCC protocol consists of alternating rounds of measurements and broadcasting of outcomes where one of the parties, say, Alice makes a measurement on her subsystem and communicates the result to the other party, say, Bob. He then chooses a measurement on his subsystem depending on the outcomes of Alice and reports back the outcome to Alice and the cycle continues in this way. These operations therefore create a tree of all possible outcomes where each branch represents a series of operators implemented by the two parties. Suppose $N_i$ index the $i$-th branch after $N$ rounds of measurement so that $A^{N_i}\otimes B^{N_i}$ describes the net operation throughout all $N$ rounds along $N_i$. We now prove the following claim: For any branch $N_i$, $A^{N_i}\otimes B^{N_i}\ket{\psi_1}\neq 0$ if and only if $A^{N_i}\otimes B^{N_i}\ket{\psi_2}\neq 0$.\\

By induction we assume that it is to be true and suppose any of Alice or Bob, say Bob, makes the next measurement. For any $B^j$ let $(N+1)_{ji}$ correspond to the branch $(\mathbb{I}_A \otimes B^j)(A^{N_i}\otimes B^{N_i})$. Since the conversion is deterministic, $A^{N_i}\otimes B^{N_i}\ket{\psi_1}$ and $A^{N_i}\otimes B^{N_i}\ket{\psi_2}$ are both states of Schmidt rank at least $d$. Thus if the measurement operator $B^j$ is applied, it must be of rank at least $d$ and therefore cannot annihilate any state with Schmidt rank $d$ or more. Hence, $A^{N_i}\otimes B^{N_i}\ket{\psi_1}\neq 0$ if and only if $A^{N_i}\otimes B^{N_i}\ket{\psi_2}\neq 0$. This argument also holds true when $N=0$ and therefore the claim is justified.\\

Next, consider a branch in the protocol that transforms $\ket{\psi_1}$ to $\ket{\phi}$. Clearly, $\ket{\psi_2}$ must also be converted into $\ket{\phi}$ along this branch. The states $\ket{\psi_1}$ and $\ket{\psi_2}$ are linearly independent, but  at the end they become the same state. So there exist some round $N$ on this branch such that $\ket{\psi_1^{N-1}}$ and $\ket{\psi_2^{N-1}}$ are linearly independent but $\ket{\psi_2^{N}}$ is a scalar multiple of $\ket{\psi_1^{N}}$. Here, $\ket{\psi_{1(2)}^{N}}$ is the resultant state after $N$ rounds, which was originally $\ket{\psi_{1(2)}}$. Without loss of generality, we assume that $N$-th round measurement is performed by Bob. Then $(\mathbb{I}_A \otimes B)\ket{\psi_1^{N-1}}=c\ket{\phi}$ and $(\mathbb{I}_A \otimes B)\ket{\psi_2^{N-1}}=d\ket{\phi},$ where $c, d$ are non-zero scalars. This implies that  $(\mathbb{I}_A \otimes B) (d\ket{\psi_1^{N-1}}-c\ket{\psi_2^{N-1}})= 0$ where $d\ket{\psi_1^{N-1}}-c\ket{\psi_2^{N-1}}$ is a pure entangled state of Schmidt rank at least $d$. But this contradicts the assumption that $k_2< d$.
\end{proof}

Notice that the class of separable operations can transform the class of mixed states into a maximally entangled state. Clearly this result is somewhat stronger than previous result in \cite{21}. However, even if the bound is satisfied by some bipartite system, there exist mixed states for which we could not find a suitable separable operation to convert it into a pure entangled state. Consider the state $\rho= p\ket{\psi_1}\bra{\psi_1}+(1-p)\ket{\psi_2}\bra{\psi_2}$ where $\ket{\psi_1}=\frac{1}{\sqrt{2}}(\ket{00}+\ket{11})$ and $\ket{\psi_2}= \frac{1}{\sqrt{2}}(\ket{01}+\ket{10})$. This is indeed one such example. We now proceed further to analyze the case of tripartite quantum systems.\\

In tripartite scenario, we will only be concerned about distilling GHZ-type states. We find that for a certain class of mixed states, it is possible to distill GHZ-type states with unit probability. A bound similar to bipartite systems also exist for this case.\\
\begin{theorem}
There exists mixed states in $p\otimes q \otimes r$ system that can be deterministically converted into a pure $d$-level GHZ-type state by a suitable separable operation if and only if $p+q+r \geqslant 4d$ where $\min\{p,q,r\}\geqslant d$. Additionally, if $\max\{p,q,r\}\geqslant 2d$ the task can also be achieved by LOCC.
\end{theorem}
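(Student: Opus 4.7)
The plan is to mirror the two-party proof of Theorem~1 across three parties. For sufficiency when $p+q+r = 4d$, write $p = d+k_1$, $q = d+k_2$, $r = d+k_3$ with $k_1, k_2, k_3 \geqslant 1$ and $k_1 + k_2 + k_3 = d$. Take $\ket{\psi_1} = \ket{GHZ_d}$ and $\ket{\psi_2} = \frac{1}{\sqrt{d}}\sum_{i=0}^{d-1} \ket{k_1+i,\, k_2 \oplus_q i,\, d \oplus_r i}$ (any cyclic shift that uses each party's extra dimensions and is orthogonal to $\ket{\psi_1}$ will do), and set $\rho = \lambda \ket{\psi_1}\bra{\psi_1}+(1-\lambda)\ket{\psi_2}\bra{\psi_2}$ with $\lambda \in (0,1)$. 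The two separable Kraus operators $E_1, E_2$ will each be a threefold tensor product of single-party operators of the form $\sum_i \alpha_i \ket{i}\bra{i}$ and $\sum_i \alpha'_i \ket{i}\bra{\text{shifted}(i)}$, with coefficients drawn from $\{1, 1/\sqrt{2}\}$ on index ranges tied to $k_1, k_2, k_3$, so that $E_j \ket{\psi_j} \propto \ket{GHZ_d}$ and $E_1^\dagger E_1 + E_2^\dagger E_2 = \mathbb{I}$. This is a routine generalisation of the $\eta, \nu$ bookkeeping in Theorem~1 to a triple $\eta, \nu, \mu$. For the LOCC case when some party, say Charlie, has dimension $\geqslant 2d$, I would instead take $\ket{\psi_2} = \frac{1}{\sqrt{d}}\sum_i \ket{i,i,d+i}$ and let Charlie perform the two-outcome measurement $\{\sum_i \ket{i}\bra{i},\, \sum_i \ket{i}\bra{d+i}\}$, which deterministically lands each summand on $\ket{GHZ_d}$.

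For necessity under separable operations, I would assume $k_1 + k_2 + k_3 < d$ and (WLOG) take a rank-two distillable $\rho$ with orthogonal eigenstates $\ket{\psi_1}, \ket{\psi_2}$. Linearity forces some Kraus element $A \otimes B \otimes C$ to map $\ket{\psi_1} \mapsto a\ket{GHZ_d}$ and $\ket{\psi_2} \mapsto b\ket{GHZ_d}$. Because $\ket{GHZ_d}$ has Schmidt rank $d$ across each of the three bipartitions, each of $A, B, C$ must itself have rank $\geqslant d$, hence nullities at most $k_1, k_2, k_3$ respectively. In case~(i) the nonzero vector $b\ket{\psi_1} - a\ket{\psi_2}$ lies in $\ker(A \otimes B \otimes C)$; in case~(ii) one of the eigenstates does. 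In either case the contradiction is to be drawn from the fact that the three nullities together total less than $d$, while the vector in question retains a genuine $d$-level entanglement structure inherited from being GHZ-convertible.

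For the LOCC necessity when $\max\{k_1, k_2, k_3\} < d$, I would replay the branch-tree induction from Theorem~1. Along any branch that successfully outputs $\ket{GHZ_d}$, both $\ket{\psi_1}$ and $\ket{\psi_2}$ must persist with Schmidt rank $\geqslant d$ across every bipartition, because a single-party operator of nullity $< d$ cannot annihilate a state of Schmidt rank $\geqslant d$ in the cut containing that party. At the round where the two surviving trajectories first collapse onto a common ray, the acting single-party operator must annihilate a nonzero linear combination that still has Schmidt rank $\geqslant d$ across that party's cut, contradicting $\max_j k_j < d$.

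The main obstacle I anticipate is the tripartite analogue of the kernel-rank lemma in the separable-necessity step. In the two-party setting the decomposition $\ker(A \otimes B) = \ker A \otimes \mathcal{H}_B + \mathcal{H}_A \otimes \ker B$ together with Schmidt-rank subadditivity immediately bounds the Schmidt rank of any kernel vector by $\dim\ker A + \dim\ker B$. The tripartite kernel splits into three subspaces $\ker A \otimes \mathcal{H}_{BC}$, $\mathcal{H}_A \otimes \ker B \otimes \mathcal{H}_C$, $\mathcal{H}_{AB} \otimes \ker C$, and no single bipartition's Schmidt rank enjoys a comparably clean subadditive bound. I expect that closing this step will require combining counting arguments across all three bipartitions simultaneously, together with the fact that both $\ket{\psi_1}$, $\ket{\psi_2}$ and any nonzero linear combination appearing in the argument are GHZ-convertible, to force the desired contradiction from $k_1 + k_2 + k_3 < d$.
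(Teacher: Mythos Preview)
Your plan mirrors the paper's proof closely, but two points deserve comment.

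First, in the sufficiency construction you stipulate $k_1,k_2,k_3\geqslant 1$. That leaves out the boundary case where one local dimension equals $d$ exactly (e.g.\ $p=d$, $k_1=0$, $k_2+k_3=d$ with both positive). The paper treats this as a separate subcase: the party with no extra room acts by $\tfrac{1}{\sqrt{2}}\mathbb{I}$ in both Kraus elements, while the other two parties carry the $\{1,1/\sqrt{2}\}$ bookkeeping exactly as in Theorem~1. Your ``any cyclic shift'' parenthetical does not cover this, since that party has no shift available; you should add this subcase explicitly. The remaining possibility $k_1=k_2=0$, $k_3=d$ is precisely the LOCC construction, which you already have.

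Second, the obstacle you flag in the separable-necessity step is exactly the right worry, but the paper does not resolve it by any clever tripartite kernel-rank lemma. Instead it restricts the converse (and hence the theorem) to rank-two mixed states whose eigenvectors $\ket{\psi_1},\ket{\psi_2}$ are themselves GHZ-type of level $\geqslant d$; this restriction is stated in the appendix and flagged in the introduction. Under that hypothesis the linear combination $b\ket{\psi_1}-a\ket{\psi_2}$ is taken to remain GHZ-type of level $\geqslant d$, and then the contradiction with $k_1+k_2+k_3<d$ follows from the same nullity count you wrote down. So you should not expect to prove the converse for arbitrary mixed states: drop the attempt at a general tripartite Schmidt-rank subadditivity argument and impose the GHZ-type-eigenvector hypothesis up front, matching the scope the paper actually claims. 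The LOCC necessity argument then goes through by the branch-tree induction exactly as you outline.
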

\begin{proof}
Without loss of generality, we assume that $r\geqslant q \geqslant p \geqslant d$ and $p= d+k_1, q= d+k_2, r= d+k_3$ such that $k_1+k_2+k_3= d$, where $k_1, k_2, k_3$ are non-negative integers so that $p+q+r= 4d$. Consider a mixed state $\rho$ of this system defined by $\rho=p\ket{\psi_1}\bra{\psi_1}+(1-p)\ket{\psi_2}\bra{\psi_2}$ where $\ket{\psi_1}, \ket{\psi_2}$ are two orthogonal $d$-level(or more) GHZ-type state.\\

  (I) If $k_1= 0$ and $k_2, k_3$ are non-zero and suppose $$\ket{\psi_1}=\frac{1}{\sqrt{d}}\sum_{i=0}^{d-1}\ket{i,i,i}, \ket{\psi_2}=\frac{1}{\sqrt{d}}\sum_{i=0}^{d-1}\ket{i,k_2+i,d\oplus_{r}i}.$$ The separable operation $\{E_1, E_2\}$ defined by $$E_1 = \frac{1}{\sqrt{2}}\mathbb{I}_A \otimes \big({\sum_{i=0}^{d-1} \eta_i \ket{i}\bra{i}}\big)\otimes \big({\sum_{i=0}^{d-1} \nu_i \ket{i}\bra{i}}\big),$$ $$ E_2 = \frac{1}{\sqrt{2}}\mathbb{I}_A \otimes\big({\sum_{i=0}^{d-1} \eta_{i}{'} \ket{i}\bra{{k_2}+i}}\big)\otimes \big({\sum_{i=0}^{d-1} \nu_{i}{'} \ket{i}\bra{d\oplus_r i}}\big)$$ where 
  \begin{align*}
\eta_i=\begin{cases} 1 \hspace{6mm} ;\: 0\leqslant i \leqslant k_2 -1 \\ \frac{1}{\sqrt{2}} \hspace{3mm} ;\: k_2 \leqslant i \leqslant d-1
\end{cases}
\nu_i=\begin{cases} \frac{1}{\sqrt{2}} \hspace{3mm} ;\: 0\leqslant i \leqslant k_2 -1 \\ 1 \hspace{6mm} ;\: k_2 \leqslant i \leqslant d-1
\end{cases}
\end{align*} \vspace{1mm}
\begin{align*}
\eta_{i}{'}=\begin{cases} \frac{1}{\sqrt{2}} \hspace{3mm} ;\: 0\leqslant i \leqslant k_3 -1 \\ 1 \hspace{6mm} ;\: k_3 \leqslant i \leqslant d-1
\end{cases}
\nu_{i}{'}=\begin{cases} 1 \hspace{6mm} ;\: 0\leqslant i \leqslant k_3 -1 \\ \frac{1}{\sqrt{2}} \hspace{3mm} ;\: k_3 \leqslant i \leqslant d-1
\end{cases}
\end{align*}
(II) If all $k_i$'s are non-zero, let $$\ket{\psi_1}=\frac{1}{\sqrt{d}}\sum_{i=0}^{d-1}\ket{i,i,i},$$ $$\ket{\psi_2}= \frac{1}{\sqrt{d}}\sum_{i=0}^{d-1}\ket{k_1+i,\overline{k_1+k_2}\oplus_q i,d\oplus_r i}.$$ The separable operation $\{E_1,E_2\}$ defined by $$E_1 =\big({\sum_{i=0}^{d-1} \alpha_i \ket{i}\bra{i}}\big)  \otimes \big({\sum_{i=0}^{d-1} \beta_i \ket{i}\bra{i}}\big)\otimes \big({\sum_{i=0}^{d-1} \gamma_i \ket{i}\bra{i}}\big),$$ $$ E_2 = \big({\sum_{i=0}^{d-1} \alpha_{i}{'} \ket{i}\bra{{k_1}+i}}\big) \otimes\big({\sum_{i=0}^{d-1} \beta_{i}{'} \ket{i}\bra{\overline{k_1+k_2}\oplus_q i}}\big)$$ $\otimes \big({\sum_{i=0}^{d-1} \gamma_{i}{'} \ket{i}\bra{d\oplus_r i}}\big)$, where, 
\begin{align*}
\alpha_i=\begin{cases} 1 \hspace{4mm} ;\: 0\leqslant i \leqslant k_1 -1 \\ \frac{1}{\sqrt{2}} \hspace{1mm} ;\: k_1 \leqslant i \leqslant d-1
\end{cases}
\alpha_{i}{'}=\begin{cases} \frac{1}{\sqrt{2}} \hspace{1mm} ;\: 0\leqslant i \leqslant k_2 +k_3 -1 \\ 1 \hspace{4mm} ;\: k_2 +k_3 \leqslant i \leqslant d-1
\end{cases}
\end{align*} \vspace{1mm}
\begin{align*}
\beta_i=\begin{cases} \frac{1}{\sqrt{2}} \hspace{1mm} ;\: 0\leqslant i \leqslant k_1 -1 \\ 1 \hspace{4mm} ;\: k_1 \leqslant i \leqslant k_1+k_2 -1 \\ \frac{1}{\sqrt{2}} \hspace{1mm} ;\: k_1+k_2 \leqslant i \leqslant d -1
\end{cases}
\end{align*}
\begin{align*}
\beta_{i}{'}=\begin{cases} \frac{1}{\sqrt{2}} \hspace{1mm} ;\: 0\leqslant i \leqslant k_3 -1 \\ 1 \hspace{4mm} ;\: k_3 \leqslant i \leqslant k_2+k_3 -1 \\ \frac{1}{\sqrt{2}} \hspace{1mm} ;\: k_2+k_3 \leqslant i \leqslant d -1
\end{cases}
\end{align*} \vspace{1mm}
\begin{align*}
\gamma_i=\begin{cases} \frac{1}{\sqrt{2}} \hspace{1mm} ;\: 0\leqslant i \leqslant k_1+k_2 -1 \\ 1 \hspace{4mm} ;\: k_1+k_2 \leqslant i \leqslant d-1
\end{cases}
\end{align*} 

\begin{align*}\gamma_{i}{'}=\begin{cases} 1 \hspace{4mm} ;\: 0\leqslant i \leqslant k_3 -1 \\ \frac{1}{\sqrt{2}} \hspace{1mm} ;\: k_3 \leqslant i \leqslant d-1
\end{cases}
\end{align*}
(III) Finally for the LOCC case, let $p= d, q= d, r= 2d$ and $$\ket{\psi_1}= \frac{1}{\sqrt{d}}\sum_{i=0}^{d-1}\ket{i,i,i}, \ket{\psi_2}=\frac{1}{\sqrt{d}}\sum_{i=0}^{d-1}\ket{i,i,d+i}.$$ Now, Charlie can perform a local measurement $\{C_1, C_2\}$ on his subsystem to distill the state $\ket{\psi_1}$ from $\rho$ where $C_1= \sum_{i=0}^{d-1}\ket{i}\bra{i}$, $C_2=\sum_{i=0}^{d-1}\ket{i}\bra{d+i}$.\\

The converse part of the theorem is very similar to that of bipartite case. If interested, the reader may find it in the appendix.
\end{proof}
Unlike bipartite scenario, some entanglement can be distilled with unit probability even for the simplest case of a three qubit system by LOCC. However, the entangled state may not be genuine.\\

For example, consider the $2\otimes 2\otimes 2$ mixed state $\rho= p\ket{\psi_1}\bra{\ket{\psi_1}}+(1-p)\ket{\psi_2}\bra{\ket{\psi_2}}$ where, $$\ket{\psi_1}= \frac{1}{\sqrt{2}}(\ket{000}+\ket{111}), ~~\ket{\psi_2}=\frac{1}{\sqrt{2}}(\ket{001}+\ket{110}).$$ First, we write Charlie's system in $\{\ket{+}_{_C},\ket{-}_{_C}\}$ basis where, $$\ket{+}_{_C}=\frac{1}{\sqrt{2}}(\ket{0}_{_C} +\ket{1}_{_C}), ~\ket{-}_{_C}=\frac{1}{\sqrt{2}}(\ket{0}_{_C} -\ket{1}_{_C}).$$ Then, $\ket{\psi_1}$ and $\ket{\psi_2}$ can be written as $$\ket{\psi_1}=\frac{1}{\sqrt{2}}(\ket{\phi^+}_{_{AB}}\ket{+}_{_C} +\ket{\phi^-}_{_{AB}}\ket{-}_{_C}),$$  $$\ket{\psi_2}=\frac{1}{\sqrt{2}}(\ket{\phi^+}_{_{AB}}\ket{+}_{_C} -\ket{\phi^-}_{_{AB}}\ket{-}_{_C})$$ where, $$\ket{\phi^+}_{_{AB}}=\frac{1}{\sqrt{2}}(\ket{00}+\ket{11})_{_{AB}}, ~\ket{\phi^-}_{_{AB}}=\frac{1}{\sqrt{2}}(\ket{00}-\ket{11})_{_{AB}}.$$ Charlie now measures his local system in the basis $\{C_1, C_2\}$ where $C_1=\ket{+}\bra{+}$, $C_2=\ket{-}\bra{-}$. If the outcome is $C_1$, an EPR state $\ket{\phi^+}$ is generated between $A$ and $B$ by tracing out subsystem of Charlie from the system. If $C_2$ occurs, Charlie communicates this result to Bob. He then performs a unitary measurement $\sigma_z$ on his subsystem ($\sigma_z= \ket{0}\bra{0}-\ket{1}\bra{1}$). Tracing out $C$, the EPR $\ket{\phi^+}$ is again generated between $A$ and $B.$ \\

In conclusion, we have provided a bound on dimensions of bipartite systems for which the distillation of pure entangled states with certain Schmidt ranks are never possible and resolves the long standing problem. A similar bound is also observed for tripartite systems when the original state is a mixture of GHZ-type states and the target state is a pure entangled GHZ-type state. Unfortunately, we could not find such bounds for distilling W-type states which we leave as an open problem. Although the mathematical treatment is sometimes very similar to \cite{20,21}, our results provide a complete answer to the bipartite systems and up to some extent for tripartite systems regarding the existence of distillable states. Also the separable operation is optimal for the chosen class of states since it is of Kraus rank two and is effective enough to extract a maximally entangled state for both bipartite and tripartite systems. Moreover, these results prove the existence of a class of both bipartite and tripartite systems where separable operation conclusively outperforms LOCC. This class becomes increasingly large with the increment in local dimensions. The results of tripartite systems can be generalised for $n$-partite systems. Under similar restrictions, for a $n$-partite system $\bigotimes_{i=1}^n a_i$, the bound would be $\sum_{i=1}^n a_i \geqslant(n+1)d$ where the target state is a $d$-level $n$-partite GHZ-type state $\ket{GHZ_{n}^d}= \frac{1}{\sqrt{d}}\sum_{i=0}^{d-1}\ket{i,i,i,......(n\:times)}$. An interesting progress of this work would be if someone will consider W-type states under distillation scenario. Also, it seems that there are only certain classes of mixed states that can be transformed into a pure entangled state. One should naturally be curious to know the characterization of those classes. Moreover, we hope that our work will be helpful in realizing $d$-level entangled state in the distant lab scenario.\\
The authors I.C. and D.S. acknowledge the work as part of Quest initiatives by DST, India. The authors A.B. and I.B. acknowledge the support from UGC, India.

\section*{Appendix : Proof of converse part of Theorem 2}
First we consider the separable case. Let $p+q+r<4d$ where $p=d+k_1,\:q=d+k_2,\:r=d+k_3$ satisfying $k_1+k_2+k_3<d$. If possible, let there exist a mixed state $\rho=p\ket{\psi_1}\bra{\psi_1}+(1-p)\ket{\psi_2}\bra{\psi_2}$ of rank two where $\ket{\psi_1},\ket{\psi_2}$ are two $d$-level (or more) orthogonal GHZ-type states, that can be converted into a pure $d$-level GHZ-type entangled state $\ket{\phi}$ with certainty by some separable operation, say $\{A_i\otimes B_i\otimes C_i\}_i$. Then there exist $A\otimes B\otimes C\in \{A_i\otimes B_i\otimes C_i\}_i$ such that $(A\otimes B\otimes C)\ket{\psi_1}=a\ket{\phi}$ and $(A\otimes B\otimes C)\ket{\psi_2}=b\ket{\phi}$. (i) if both $a, b$ are non-zero, $(A\otimes B\otimes C)(b\ket{\psi_1}-a\ket{\psi_2})= 0$. Since $(A\otimes B\otimes C)(x\ket{\psi_1}+y\ket{\psi_2})=(ax+by)\ket{\phi}\forall x,y\in\mathbb{C}$, it must be that $b\ket{\psi_1}-a\ket{\psi_2}$ is at least a $d$-level GHZ-type state. But the rank space of all $A,B,C$ are $d$-dimensional, implies that the annihilator subspace is atmost of dimension $k_1+k_2+k_3$, a contradiction. (ii) if $b=0, a\neq 0$ then $(A\otimes B\otimes C)\ket{\psi_1}=a\ket{\phi}$ implies that all $A,B,C$ are of rank at least $d$. Thus for $b$ to be zero, we must have $k_1+k_2+k_3\geqslant d$, a contradiction. Therefore in either cases, no separable operation exist.\\
Next for LOCC case, we assume that $max\{k_1,k_2,k_3\}< d$. We again assume that $\rho$ be the mixed state defined above. If possible, let there exist some LOCC protocol $\Omega$ such that $\Omega(\rho)= \ket{\phi}\bra{\phi}$. Any finite round LOCC protocol for deterministic conversion in a tripartite system must be initiated by one of the three parties Alice, Bob or Charlie through a measurement on his (her) subsystem and reporting the outcome to some other party by classical communications. Depending on the outcome, some other party picks a measurement for its part of the system again reports the outcome to some other party and the process continues according to the protocol. These operations therefore create a tree of all possible outcomes where each branch represents a series of operators implemented by the three parties. Let $N_i$ index the $i$-th branch after $N$ rounds of measurement so that $A^{N_i}\otimes B^{N_i}\otimes C^{N_i}$ describes the net operation throughout all $N$ rounds along $N_i$. We now proceed to prove the following claim:\\

For any branch $N_i$, $A^{N_i}\otimes B^{N_i}\otimes C^{N_i}\ket{\psi_1}\neq 0$ if and only if $A^{N_i}\otimes B^{N_i}\otimes C^{N_i}\ket{\psi_2}\neq 0$. By induction, we assume that it is true and let Charlie makes the next measurement. For any $C^j$, let $(N+1)_{ji}$ correspond to the branch $(\mathbb{I}\otimes\mathbb{I}\otimes C^j)(A^{N_i}\otimes B^{N_i}\otimes C^{N_i})$. Since the conversion is deterministic, $A^{N_i}\otimes B^{N_i}\otimes C^{N_i}\ket{\psi_1}$ and $A^{N_i}\otimes B^{N_i}\otimes C^{N_i}\ket{\psi_2}$ are both GHZ-type states of at least level $d$. If $C^j$ is applied on either state, it must at least be of rank $d$ and thus fails to eliminate any $d$-level GHZ-type states. Consequently, $A^{(N+1)_{ji}}\otimes B^{(N+1)_{ji}}\otimes C^{(N+1)_{ji}}\ket{\psi_1}\neq 0$ if and only if  $A^{(N+1)_{ji}}\otimes B^{(N+1)_{ji}}\otimes C^{(N+1)_{ji}}\ket{\psi_2}\neq 0$. This argument is also applicable for the case  $N=0$.\\

Next, we consider a branch in the protocol that converts $\ket{\psi_1}$ to $\ket{\phi}$. The state $\ket{\psi_2}$ must also be converted into $\ket{\phi}$ along this branch. The states $\ket{\psi_1}$ and $\ket{\psi_2}$ are orthogonal but ultimately become the same the state $\ket{\phi}$ in the end of the protocol. So there exist some round $N$ in this branch such that $\ket{\psi_1^{N-1}}$ and $\ket{\psi_2^{N-1}}$ are linearly independent but $\ket{\psi_2^{N}}$ is a scalar multiple of $\ket{\psi_1^{N}}$. Here $\ket{\psi_{1(2)}^{N}}$ is the resultant state after $N$ rounds, which originally was $\ket{\psi_{1(2)}}$. Without loss of generality, we assume that $N$-th round measurement is performed by Charlie. Then $(\mathbb{I}\otimes\mathbb{I}\otimes C)\ket{\psi_1^{N-1}}=a\ket{\phi}$ and $(\mathbb{I}\otimes\mathbb{I}\otimes C)\ket{\psi_2^{N-1}}=b\ket{\phi}$ where $a,b$ are non-zero scalars. Thus $(\mathbb{I}\otimes\mathbb{I}\otimes C)(b\ket{\psi_1^{N-1}}-a\ket{\psi_2^{N-1}})= 0$ where $b\ket{\psi_1^{N-1}}-a\ket{\psi_2^{N-1}}$ is a pure entangled GHZ-type state of level $d$, at least. But this contradicts the fact $k_3< d$. This proves our claim. \hspace{10mm}$\blacksquare$
\end{document}